\newcommand{\tabincell}[2]{\begin{tabular}{@{}#1@{}}#2\end{tabular}}
	\newtheorem{theorem}{\bf Theorem}
	\newtheorem{definition}{Definition}
\begin{document}
		\title{A Theory for Semantic Channel Coding With
			Many-to-one Source}
	
	\author{Shuai Ma, Chuanhui Zhang, Huayan Qi, Hang Li, Yue Bi,\\
		Guangming Shi,~\IEEEmembership{Fellow,~IEEE},   and Naofal Al-Dhahir,~\IEEEmembership{Fellow,~IEEE}
		\thanks{Shuai Ma is with Peng Cheng Laboratory, Shenzhen 518066, China. (e-mail:  mash01@pcl.ac.cn).}

	}
		\maketitle
		\begin{abstract}
			
			As one of the potential key technologies of 6G,    semantic communication
			is still in its infancy and there are many open problems,
			such as semantic
			entropy definition       and   semantic channel coding theory.
			To address these challenges,  we investigate      semantic information measures and   semantic channel coding theorem.
			Specifically,         we propose   a  semantic entropy definition     as the uncertainty in the semantic interpretation of random variable symbols in the context of knowledge bases,  which can be transformed into existing semantic entropy definitions under   given conditions.
			Moreover, different from traditional communications,
  semantic communications can achieve accurate transmission of semantic information under  a non-zero bit error rate.
  Based on this  property,
				we derive a semantic channel coding theorem for a typical semantic communication with many-to-one source (i.e., multiple source sequences express the same meaning), and prove its achievability and converse based on a generalized Fano's inequality.
			  Finally, numerical results  verify the effectiveness of the proposed
			semantic entropy and semantic channel coding theorem.
			
		\end{abstract}
		\begin{IEEEkeywords}
			Semantic communications,     semantic entropy,   semantic channel coding theorem.
		\end{IEEEkeywords}
		
		\IEEEpeerreviewmaketitle

		\section{Introduction}

		According to the classic information theory established by Claude Shannon in 1948 \cite{Shannon_ACM_2001}, communication systems  advances are made by exploring new spectrum utilization methods and new coding schemes.
		However, due to the explosive growth of   intelligent services, such as augmented reality/virtual reality, holographic communication, and autonomous driving \cite{Anthes_AC_2016,Voulodimos_CIN_2018,Akyildiz_JFET_2022}, the fifth generation communications (5G)   system is facing many bottlenecks:
		channel capacity is approaching Shannon limit \cite{Niu_MC_2021,Zhang_E_2022,Shi_CJIT_2022},   source coding efficiency is close to the Shannon information entropy/rate distortion function limit \cite{Liu_ISIT_2021}, the energy consumption   is huge \cite{Wang_WC_2020}, and  high-quality spectrum resources are  scarce \cite{Hong_SCIC_2016,Du_ITJ_2023,Cang_ITJ_2023}.
		To meet the development needs of future  sixth generation communications (6G),
		there is an urgent need for new information representation space and degrees of freedom to improve communication efficiency and transmission capacity.

		Semantic communications \cite{Huang_ITJ_2023,Zhang_SCIC_2022, Xu_IEEE_2023}   extract semantic features from raw data,  then encode and transmit semantic information,  which are expected to   alleviate the     bottlenecks faced by the current communication networks  \cite{Shi_CJIT_2022,Lan_JCIN_2021}.
		Back in 1949,  Weaver proposed a three-level communication theory  \cite{Shannon_MG_1950} as follows,
		\begin{itemize}
			\item  Technical level: How accurately can the symbols of communication
			be transmitted?
			\item   Semantic level: How precisely do the transmitted symbols convey
			the desired meaning?
			\item   Effectiveness level: How effectively does the received meaning affect
			conduct in the desired way?
		\end{itemize}
		Essentially, semantic communications reduce communication resource overhead, i.e., bandwidth, power consumption or delay,  by exploiting computing power, which focuses on the accurate transmission of semantic information.
		Furthermore, semantic communications transform traditional grammatical communications into content-oriented communications, which will be one of the key technologies of 6G   communications \cite{Shi_CM_2021,Strinati_CN_2021,Zhu_Arxiv_2023,Shi'WC23}.

		
		\subsection{Related works}
		
		Recently,  with the great progress of artificial intelligence (AI), neural networks can extract semantic information, such as images, text, and speech, which makes semantic communications feasible  \cite{Xie_JSAC_2021,Xie_ITJ_2023,Shao_JSAC_2022,Kang_2022}.
		However, there are still many open problems in semantic communications, especially lacking information metrics and theoretic guidelines to  implement and analyze semantic communications \cite{Qin_arXiv_2021,Xin_E_2022}.

		To measure the quantity of semantic information for a source, many works proposed semantic entropy  based on logical probability \cite{Carnap_RLE_1952,Floridi_MM_2004,Basu_PMC_2014},
		fuzzy mathematics theory\cite{LUCA_IAC_1972,Liu_ITFS_2020},   language understanding model\cite{Venhuizen_MDPI_2019} or  the complexity of query tasks\cite{Chattopadhyay_2020}.
		Specifically, in 1952, Carnap and Bar-Hillel \cite {Carnap_RLE_1952} used propositional logic probability instead of statistical probability to measure the semantic information contained in a sentence, that is, the higher the probability of a sentence being logically true, the less semantic information it contains.
		The amount of semantic information \cite {Floridi_MM_2004} was    represented based on the distance from the real event, which needs the ``true" semantic event as a reference.
		With background knowledge, the semantic entropy of messages is defined based on   propositional logic theory \cite {Basu_PMC_2014}.
		A semantic entropy definition was proposed based on fuzzy mathematics theory  by \cite {LUCA_IAC_1972}.
		According to the membership degree in fuzzy set theory,  the authors in \cite {Liu_ITFS_2020} defined the semantic entropy  for classification tasks, which is   used  to calculate the optimal semantic description for each category.
		In \cite {Venhuizen_MDPI_2019}, the    semantic entropy was derived based on the general structure in the language understanding model.
		Based on the complexity of query tasks, the semantic entropy was
		defined  as the minimum number of semantic queries for data  query tasks\cite {Chattopadhyay_2020}. Besides,
		a multi-grained definition of semantic information
		was proposed in \cite {Kountouris_CM_2021} for different levels of communication systems, and used R{\'e}nyi entropy \cite{Renyi_1961} to measure semantic information.
		Semantic information was  defined  as grammatical information \cite {Kolchinsky_IF_2018} describing the relationship between the system and its environment.
			In summary, the existing works \cite{Carnap_RLE_1952,Floridi_MM_2004,Basu_PMC_2014,LUCA_IAC_1972,Liu_ITFS_2020,Venhuizen_MDPI_2019,Chattopadhyay_2020} defined task-oriented semantic entropy from different perspectives, such as logical probability \cite{Carnap_RLE_1952,Floridi_MM_2004,Basu_PMC_2014}, membership degree \cite{LUCA_IAC_1972}, and world knowledge \cite{Venhuizen_MDPI_2019}.

		Besides, 	 in contrast to    traditional communications,
  semantic communications can achieve accurate transmission of semantic information with  a non-zero bit error rate. However, how to establish semantic channel capacity is still a challenge issue, and
 some works tried to prove the achievability of the semantic channel capacity.
		The ``semantic channel capacity" was  derived based on the semantic ambiguity and the logical semantic entropy of the received signal by \cite{6004632}.
		However,   a typical joint error between the wrong codeword and the received sequence \cite{Cover_book_1999}  was ignored in  \cite{6004632}   and there is no proof of the converse.
		The authors in \cite{Basak} modeled semantic communication as a Bayesian game problem, which minimized the semantic error by optimizing the sending strategy.
		In   \cite{Liu_ISIT_2021}   a semantic communication rate-distortion framework was proposed to characterize  semantic distortion and signal distortion.
		Semantic communication was modeled as a signaling game model in game theory \cite{Jinho_2022}, and described conditional mutual information based on semantic knowledge base.
		In summary, semantic channel capacity, as the rate limit    on which    a channel can accurately transmit semantic information, is still an open problem.

		\subsection{Contributions}
		
		With the above motivations, we propose a more general  definition of  semantic entropy with the help of Shannon information entropy and prove the semantic channel coding theorem for a typical semantic communications.
		Specifically,
		the main contributions of this paper are summarized as follows:	
		

		\begin{itemize}
			\item
			
			The existing semantic entropy definitions are task-oriented,  which   limits its generalizability.  To address this issue, we adopt the a more general  definition of  semantic entropy as the uncertainty in the semantic interpretation of random variable symbols in the context of knowledge bases. Our proposed    semantic entropy definition not only depends on the probability distribution, but also depends on the specific value of the symbol and the background knowledge base. Different from the traditional measure of information,  our proposed    semantic entropy has two unique features: on the one hand,  semantic information has a certain degree of ambiguity, that is, different symbols can express the same semantic meaning; on the other hand, the same symbol may have different semantic meanings under different knowledge bases. Under the given conditions, semantic entropy can be transformed into existing semantic entropy definitions.

			\item Furthermore,   we establish a semantic channel coding theorem for semantic communications with  ``many-to-one"  semantic source, i.e., multiple source sequences express the same meaning. We first prove its achievability based on the jointly typicality tools and many-to-one mapping relation.
			We then prove the converse based on a generalized Fano's inequality. To our best knowledge, it is  the first time theoretically established the fundamental limits of semantic communications with many-to-one semantic source. This provides a theoretical tool for the  semantic communication design.

			
			\item Simulation results verify the feasibility and effectiveness of our proposed semantic entropy, and prove that with a non-zero bit error rate, the accuracy of the classification task is still high, which verifies the rationality of our proposed semantic channel coding theorem.
			
		\end{itemize}

		\begin{table}[htbp]
	\caption{Summary of Key Notations}
	\label{tablepar}
	\centering
	\begin{tabular}{|m{1.8cm}<{\centering}|m{9.2cm}|}
		\hline
		\rule{0pt}{8pt}Notations  &    Meanings \\ \hline
		
		\rule{0pt}{7.5pt}$X,Y $ &  \tabincell{c}{Random variables} \\ \hline
		
		\rule{0pt}{7.5pt}$x,y $ &  Values of random variables  \\ \hline
		
		\rule{0pt}{7.5pt}${{S}}$ &  The semantic interpretation of random variable $X$\\ \hline
		
		\rule{0pt}{7.5pt}${\text{Pr}}\left(  \cdot  \right)$ &  \tabincell{c}{The probability of  ``$ \cdot $" occurs}\\ \hline	
		
		\rule{0pt}{7.5pt}${{\cal X}}$, ${{\cal Y}}$ &  \tabincell{c}{The finite sets}\\ \hline

		\rule{0pt}{7.5pt}${\left\lceil {\alpha nR} \right\rceil } $ &  \tabincell{c}{The smallest integer not less than $ {\alpha nR}$}\\ \hline

		\rule{0pt}{7.5pt} $m$  & The index of the message set with the same semantic   \\ \hline

		\rule{0pt}{7.5pt} $x^n$, $y^n$  &  \tabincell{c}{ $n$-length sequences}  \\ \hline
		\rule{0pt}{7.5pt} $X^n$, $Y^n$   &  \tabincell{c}{ $n$-length sequences of random variables}  \\ \hline
		\rule{0pt}{7.5pt}  ${{\cal X}^n}$, ${{\cal Y}^n}$  & The sequence sets of $X^n$, $Y^n$   \\ \hline
		
		\rule{0pt}{7.5pt}  $\varepsilon $ & An error message   \\ \hline
		\rule{0pt}{7.5pt} $\emptyset $  &  \tabincell{c}{Null set}  \\ \hline

		\rule{0pt}{7.5pt}$\left|  \cdot  \right|$ &  \tabincell{c}{The number of containing elements}\\ \hline
		
	\end{tabular}
\end{table}

		The rest of this paper is organized as follows.   We  propose a    definition of  semantic entropy  in Section II. We propose a semantic channel coding theorem and prove its achievability  in Section III.
		In Section IV, we derive and prove the generalized Fano's inequality and use it to  prove the converse of the semantic channel coding  theorem. Finally, Section V concludes the paper. Table I presents  the
		key notations in  this paper and their meanings.

		\section{Proposed Definition of Semantic Entropy}

		Shannon entropy is a functional
		of the distribution of a random variable, which does not depend on the actual states taken by
		the random variable, but only on their probabilities.  As a result,      Shannon entropy can not be directly applied  to measure the semantic information.
		Different from   Shannon information theory,   semantic information has   the following  two characteristics:
		\begin{itemize}	
			\item  The semantic  has a certain degree of ambiguity. That is, different symbols can express the same meaning.  For example, ``red" and ``crimson" both mean ``red" in the given knowledge base condition.
			
			\item  The semantic meaning of the  symbol may be different under a different knowledge base. For instance, ``notebook"  indicates paper notebook in stationery store  and also means computer in computer store. It depends on the specific context knowledge base.	
		\end{itemize}
		Consequencely, semantic entropy of a random variable depends not only on its probability distribution, but also on the    value  of  the random variable  and its  background knowledge base.

		Note that, the existing semantic entropy definitions are task-oriented, that is, these semantic entropies can only be used in the given tasks.
		To propose a  more general measure of semantic information, we  define the semantic entropy as \emph{the uncertainty of the semantic interpretation of the random variable   in the knowledge base  background.}

		\subsection{Discrete semantic entropy}


		Specifically,  let $U$ denote a   discrete random variable  with the probability mass function ${p}\left( u\right)$.
		Based on knowledge base $\mathcal{K}$, let ${{S}}$ denote the semantic interpretation  of the discrete random variable   $U$.
		The semantic interpretation ${{S}}$    depends on the  value of $U$ and   knowledge base  $\mathcal{K}$.
		Furthermore, the probability   mass function of the semantic interpretation ${{S}}$ is given as, 	{
		\begin{align}
			p\left( {{s}} \right) = \sum_u \limits {p_\mathcal{K}\left( {{s}|u} \right)p\left( u \right)} \label{eq2},
		\end{align}
		where  the conditional probability $p_\mathcal{K}\left( {{s}|u} \right)$ is the probability transition function from  random variable  $U$  to its semantic interpretation ${{S}}$,   which  depends on  the  value of $U$ and   knowledge base  $\mathcal{K}$.}
		Sometimes, it is difficult to measure the relationship $p\left( {{s}|u} \right)$ between  random variable $U$ and its semantic ${{S}}$. Estimating $p\left( {{s}|u} \right)$ is a challenging problem, and one could use statistical methods or deep learning (DL) to estimate.

		\begin{definition}
			\label{Hs}
			The   discrete semantic entropy  ${H_{\text{s}}}\left( U \right)$ of the random variable   $U$ given knowledge base $\mathcal{K}$ is	defined by,
			\begin{align}
				{H_{\text{s}}}\left( U \right) &\triangleq 
				 - \sum\limits_{{s}} { \big(\sum_u \limits {p_\mathcal{K}\left( {{s}|u} \right)p\left( u \right)}\big){{\log }_2}\big(\sum_u \limits {p_\mathcal{K}\left( {{s}|u} \right)p\left( u \right)}\big)}.
			\end{align}
		\end{definition}

		The  relationships between our proposed   semantic entropy and the existing semantic entropy  definitions and Shannon entropy are shown in Table \ref{Relationship}.

\begin{table*}[htbp]
	\caption{The  relationships between our proposed   semantic entropy, the existing semantic entropy definitions and Shannon entropy.}
	\label{Relationship}
	\centering
	\small
	\begin{tabular}{|c|c|c|c|}
		\hline
		\rule{0pt}{1pt}\small{Forms} & \small{Semantic entropy equation}  & \makecell[c]{Relationship with the proposed \\semantic entropy} \\ \hline
		\rule{0pt}{15pt} \small{Proposed semantic entropy} & $ {H_s}\left( X \right) =  - \sum\limits_{{s}} {p\left( {{s}} \right){{\log }_2}p\left( {{s}} \right)} $&   $/$ \\ \hline
		
		\rule{0pt}{15pt} \small{Logical semantic entropy\cite{Basu_PMC_2014}} &$H\left( M \right) =  - \sum\limits_{m \in {\Delta _M}} {{P_M}\left( m \right){{\log }_2}\left( {{P_M}\left( m \right)} \right)} $&  $p\left( {{s}} \right) = {p_M}\left( m \right)$ is logical probability \\ \hline
		
		\rule{0pt}{15pt} \small{Fuzzy semantic entropy\cite{Liu_ITFS_2020}} &$H\left( \zeta  \right) =  - \sum\limits_{j = 1}^m {{D_j}\left( \zeta  \right){{\log }_2}{D_j}\left( \zeta  \right)} $ &  $p\left( {{s}} \right) = {D_j}\left( \zeta  \right)$  is the degree of match \\ \hline
		
		\rule{0pt}{15pt} \small{\makecell[c]{Language understanding\\ semantic entropy \cite{Venhuizen_MDPI_2019}}}  &$ H\left( {{v_t}} \right) =  - \sum\limits_{{{\mathbf{v}}_M} \in {V_\mathcal{M}}}^{} {P\left( {{{\mathbf{v}}_M}|{{\mathbf{v}}_t}} \right)\log P\left( {{{\mathbf{v}}_M}|{{\mathbf{v}}_t}} \right)} $  & \makecell[c] {$p\left( {{s}} \right) = P\left( {{{\mathbf{v}}_M}|{{\mathbf{v}}_t}} \right)$ is the conditional\\ probability of language understanding} \\ \hline
		
		\rule{0pt}{15pt} \small{Shannon entropy  \cite{Cover_book_1999}}  &$ H\left( X \right) =  - \sum\limits_{x \in \mathcal{X}} {p\left( x \right)\log p\left( x \right)} $  & ${p\left( x \right)}$ is the probability
		mass function \\ \hline
		
	\end{tabular}
\end{table*}

		The following two examples illustrate the relationship between semantic entropy and its value and knowledge base.

		\textbf{Example 1.} The random variable $U \in \mathcal{U}$ represents the student mentioned by the teacher, its value space $\mathcal{U}$ has three different values, $\mathcal{U} = \left\{ {{u_1}:{\text{Alice}},{u_2}:{\text{Bob}},{u_3}:{\text{Cindy}}} \right\}$. Assume the probability of ${u_i}$ is shown in Table \ref{tab2}. Then, the  Shannon entropy of random variable $U$ is $H\left( U \right)=1.5{\text{bits}}$.
		\begin{table}[H]
			\caption{ the probability of ${u_i}$.}
			\label{probability}
			\centering
			\begin{tabular}{|c|c|c|c|}
				\hline
				\rule{0pt}{15pt}$U$& ${u_1}$  & ${u_2}$& ${u_3}$ \\ \hline
				\rule{0pt}{15pt} $p\left( {{u_i}} \right)$ & $ 0.25 $& $0.5$ & $0.25$\\ \hline	
			\end{tabular}
			\label{tab2}
		\end{table}

		\textbf{Example 2.} Given the  knowledge base  $\mathcal{K}$, let ${S}$ denote the semantic interpretation of  the random variable $U$  in Example 1, its semantic interpretation space $\mathcal{S} = \left\{ {{s_n}} \right\}_{n = 1}^N$, ${S} \in {\mathcal{S}}$. The transition probability from $U$ to   ${S}$ is written as $p\left( {{s_n}\left| {{u_i}} \right.} \right) = {\Pr _{{\mathcal{K}}}}\left( {{S} = {s_n}|U = {u_i}} \right)$.
		
			\begin{table}[H]
			\caption{ The transition probability from $U$ to   ${S}$.}
			\centering
			\begin{tabular}{|c|@{}c<{\centering}|ccc|}
				\hline
				Condition& \diagbox[width=3em,trim=l]{${S}$}{${U}$} & ${u_1}$  & ${u_2}$& ${u_3}$ \\ \hline
				\rule{0pt}{15pt}
				\multirow{3}*{$\mathcal{K}_1$} &  ${s_1}$ & $\frac{1}{3}$ & $\frac{1}{3}$ & $\frac{1}{3}$ \\
				\rule{0pt}{15pt}
				~ &  ${s_2}$ & $\frac{1}{3}$ & $\frac{1}{3}$ & $\frac{1}{3}$ \\
				\rule{0pt}{15pt}
				~ &  ${s_3}$ & $\frac{1}{3}$ & $\frac{1}{3}$ & $\frac{1}{3}$ \\ \hline
				\rule{0pt}{15pt}
				\multirow{2}*{$\mathcal{K}_2$} &  ${s_1}$ & $\frac{9}{10}$ & $\frac{4}{5}$ & $\frac{1}{2}$ \\
				\rule{0pt}{15pt}
				~ &  ${s_2}$ & $\frac{1}{10}$ & $\frac{1}{5}$ & $\frac{1}{2}$ \\ \hline
				
			\end{tabular}
			\label{tab3}
		\end{table}
	
		\textbf{(1)}  Based on  knowledge base ${\mathcal{K}_1}$: physical education class. Semantic interpretation ${S}$ indicates the instruction that the mentioned student needs to complete, and semantic interpretation space ${\mathcal{S}_1} = \left\{ {{s_1}{\text{: Turn left, }}{s_2}{\text{: Go straight, }}{s_3}{\text{: Turn right}}} \right\}$ and ${S} \in {\mathcal{S}_1}$.
		The corresponding transition probabilities ${p_1}\left( {{s_n}|{u_i}} \right) $ are listed in Table \ref{tab3}.
		According to \eqref{eq2}, the  probabilities of the corresponding semantic interpretation are calculated as  ${p_1}\left( {{s_n}} \right) = \Pr \left( {{S} = {s_n}} \right) = \frac{1}{3},n = 1,2,3$.
		Thus, with the  knowledge base ${\mathcal{K}_1}$, the semantic entropy of  $U$ is ${H_{{{\text{s}}_{\text{1}}}}}\left( U \right) = - \sum\limits_{{s_n}} {{p_1}\left( {{s_n}} \right){{\log }_2}{p_1}\left( {{s_n}} \right)}  = {\log _2}3\left( {{\text{bits}}} \right) $,
		which is larger than its Shannon entropy   $H\left( U \right)=1.5 ({\text{bits}})$.

		\textbf{(2)} Based on  knowledge base ${\mathcal{K}_2}$: the classroom. Semantic interpretation ${S}$ indicates the willingness of the student being questioned. The semantic  space ${\mathcal{S}_2} = \left\{ {{s_1}:{\text{Agree}},{s_2}:{\text{Disagree}}} \right\}$,
		${S} \in {\mathcal{S}_2}$. The corresponding transition probabilities ${p_2}\left( {{s_n}|{u_i}} \right) $ are listed in Table \ref{tab3}. Hence,  ${p_2}\left( {{s_1}} \right) = \frac{3}{4}$ and ${p_2}\left( {{s_2}} \right) = \frac{1}{4}$.	 Further, with the  knowledge base ${\mathcal{K}_2}$, the semantic entropy of $U$ is $	 {H_{{{\text{s}}_{\text{2}}}}}\left( U \right) =  - \sum\limits_{{s_n}} {{p_2}\left( {{s_n}} \right){{\log }_2}{p_2}\left( {{s_n}} \right)} \nonumber= 2 - \frac{3}{4}{\text{lo}}{{\text{g}}_{\text{2}}}{\text{3}}\left( {{\text{bits}}} \right)$,
		which is lower than its Shannon entropy   $H\left( U \right)=1.5 ({\text{bits}})$.

		Since  the semantic information of   $U$   relies on the distribution $p\left( {{s}} \right)$,
		the  proposed discrete semantic entropy ${H_{\text{s}}}\left( U \right)$ of $U$   may be higher, lower than or equal to  Shannon entropy $H\left( U \right)$.

		\subsection{Differential     semantic entropy}
		
		Let $S$ denote the semantic random variable of the  symbol  $U$ with probability density function $f\left( s \right)$, and the corresponding mapping function is,
		The semantic interpretation   depends on the  value of $U$ and   knowledge base  $\mathcal{K}$.
		Furthermore, the probability density function of the semantic interpretation $S$ is given as,
		\begin{align}
			f\left( s \right) = \int_u {f\left( {s|u} \right)f\left( u \right)du},
		\end{align}
		where     $f\left( {s|u} \right)$ is the probability transition function from  random variable  $U$  to its semantic interpretation $S$,   which  depends on  the  value of $U$ and   knowledge base  $\mathcal{K}$.
		Thus, the differential semantic entropy  ${h_{\text{s}}}\left( U \right) $ of the  symbol   $U$ is	 defined as,
		\begin{align}
			{h_{\text{s}}}\left( U \right) \triangleq h\left( S \right) =  - \int\limits_s {f\left( s \right){{\log }_2}f\left( s \right)ds}.
		\end{align}

		\subsection{Measure of semantic knowledge  }
		
		Let $K$ represent the semantic knowledge variable with probability distribution    $p\left( {k} \right)$.
		Thus,  the entropy  of semantic knowledge $K$ is given as,
		\begin{align}
			H\left( K \right) =  - \sum\limits_{k} {p\left( {k} \right){{\log }_2}p\left( {k} \right)}.
		\end{align}
		

		It is well known that semantic coding can provide extremely high compression ratios, but the specific measure of semantic compression is still unknown. Next, we will try to explain the reasons for the high compression ratio of semantic encoding.
		The relationships among  the random variables $U$, its semantic interpretation   $S$, and semantic knowledge  $K$   are given as,
		\begin{align}
			H\left( U \right) &=
			  {I\left( K; U\right)} +   {I\left( S;U|K \right)} +   {H\left( {U\left| {K,S} \right.} \right)} . \nonumber\\
			& =  {I\left( K; U\right)} +   {H( S|K)}  -H\left(S|U,K \right)+   {H\left( {U\left| {K,S} \right.} \right)} .\nonumber\\
		\end{align}
		where $I\left( K; U\right)$ represents the shared  related knowledge between the transmitter and destination; $H(S|K) $ represents the conditional semantic entropy given the knowledge variable $K$, equaling to the semantic entropy $H(S)$ if $S$ and $K$ are independent; $H\left( {U\left| {K,S} \right.} \right)$ represents the remaining information after extracting $K,S$.
		
		Furthermore, let ${G_{\text{s}}}$  denote semantic compression gain, i.e., the ratio between Shannon entropy and semantic entropy of the  data, w.t,
		\begin{align}
			{G_{\text{s}}} = \frac{{H\left( U \right)}}{{H_{\text{s}}}\left( U \right)}.
			\end{align}
		Thus, the amount of    transmission    will be decreased with the increase of semantic compression gain ${G_{\text{s}}}$.
		Since the transmitter and receiver share semantic knowledge, and semantic coding removes semantic redundancy, semantic communications has a  high compression ratio. Our goal is to find the maximum achievable compression gain without semantic error.
		
		\textbf{Example 3. (Inference task-oriented semantic communication)}
		Consider a typical task-oriented semantic communication system, i.e., the classification task for the MNIST dataset with a set of handwritten character numbers from 0 to 9, which are the semantic information of the MNIST classification task.
		The semantic entropy of each image is equal to the Shannon entropy of   labels, i.e.,
		${\log _2}10 {\text{ bits}}$.
		The average size of each image of the MNIST dataset is about $6272{\text{ bits}}$.
		Assume that semantic encoding scheme can perfectly exact the semantic information, the maximum semantic compression gain ${G_{\text{s}}}$ is given as,
		\begin{align}
			{G_{\text{s}}} = \frac{{6272}}{{{{\log }_2}10  }} \approx 1888.
		\end{align}

		\textbf{Example 4. (Data reconstruction semantic communication)}
		Consider a typical data reconstruction semantic communication system with fourier transform (FT) and inverse fourier transform (IFT). Specifically, the semantic encoder and
		decoder are designed by FT and IFT, respectively.
		
		Assume that the input signal  is a single-frequency sine wave, i.e., $f\left( t \right) = \sin {\Omega_0}t$, and  the amount of information are
		$N {\text{ bits}}$ (Since the number of discrete points in the time domain
		may tend to be infinite, the value of $N$ may tend to be infinite).
		
		Via the FT-based semantic encoder,  the frequency domain expression of the input signal is given as,
		\begin{align}
			F\left( \Omega  \right) = j\pi \left[ {\delta \left( {\Omega  + {\Omega _0}} \right) - \delta \left( {\Omega  - {\Omega _0}} \right)} \right].
		\end{align}
		Note that ${\Omega _0} \in {\omega _0}$ is the semantic information, the  semantic entropy is  $\operatorname{H} \left( {{\Omega _0}} \right) = \sum\limits_{{\Omega _0} \in {\omega _0}} { - p\left( {{\Omega _0}} \right)} $ ${\log _2}p\left( {{\Omega _0}} \right)$  bits.
		Generally, $\operatorname{H} \left( {{\Omega _0}} \right)$ is a limited value.
		The parameters $j$, $\pi $ and $\delta \left( \Omega \right)$ is the semantic knowledge base shared by  the transmitter and receiver.
		Therefore, the semantic compression gain  is ${G_{\text{s}}} = \frac{N}{{\operatorname{H} \left( {{\Omega _0}} \right)}}$, which increases as the number of bits $N$ increases.

		\section{Semantic channel coding theorem}
		
		\subsection{Semantic communication system model}
	\begin{figure}[ht]
		\centering
		\includegraphics[height=5cm]{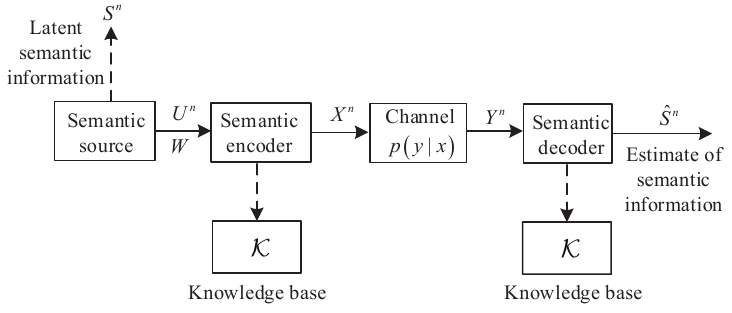}
	
		\caption{~A semantic  communication system}
		\label{img0}
	\end{figure}
		
		Consider a  semantic  communication system, as  depicted in Fig. \ref{img0}, where $\mathcal{K}$ represents the semantic knowledge base shared by the sender and receiver. 
		Assume that the source sequence ${U^n}$  and semantic information sequence ${S^n}$ satisfies the  ``many-to-one" mapping relation.   For example,  in the MNIST dataset, ${U^n}$ denotes the input images, and ${S^n}$ represents the  label of the input images, which is from 0 to 9, and there are multiple images with the same label. Specifically,  as shown in Fig. \ref{img1},    $\mathcal{U}$ and $\mathcal{S}$  denote  the source sequence set and   the semantic sequence set, respectively, and $u_l^n$ and $s_t^n$ denote the $l$-th source sequence in $\mathcal{U}$ and $t$-th semantic sequence in  $\mathcal{S}$, respectively, where $l \in \left\{ {1, \ldots ,L} \right\}$ and $t \in \left\{ {1, \ldots ,T} \right\}$. Moreover, ${\mathcal{U}_{{s_t}}}$ is the set of source sequences has the same semantic meaning $s_t^n$.
		Furthermore,  the corresponding transition probability $\sum\limits_s {p\left( {{s^n}\left| {{u^n}} \right.} \right) = 1} $ for all $u^n\in\mathcal{U}_s$, where $\mathcal{U}_s$ denotes the set of source sequences has the same semantic meaning $s^n$.
		Define a message $W \in \left[ {1:{2^{n\operatorname{H} \left( U \right)}}} \right]$ with a ``one-to-one" correspondence to the source $U$.
		The semantic transmitter wishes to reliably transmit the latent semantic information $S$ within the message $W$  at a rate $R$ bits per transmission to
		the semantic receiver over a noisy physical channel.
		
		\begin{figure}[ht]
			\centering
			\includegraphics[height=6cm]{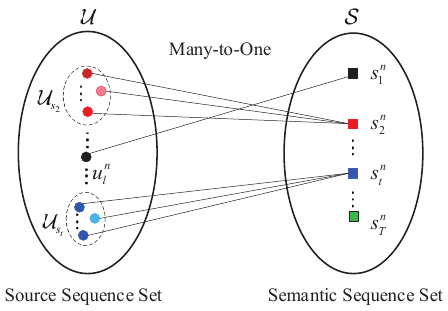}
			\caption{~The ``many-to-one" relations between ${U^n}$ and ${S^n}$.}
			\label{img1}
		\end{figure}

		Toward this
		end, the semantic transmitter encodes the message $W$ into a codeword $X^n$ and transmits it over the channel in $n$ time   channel uses.
		Upon receiving the noisy sequence $Y^n$, the semantic receiver decodes it to obtain the estimate $\widehat S$ of the semantic message.
		The semantic channel coding problem is concerned with finding     the highest rate $R$ such
		that the probability of semantic decoding error can be made to decay asymptotically to zero with
		the code block length $n$.

		Consider the sematic channel coding problem for a typical discrete memoryless channel
		(DMC) model $p\left( {y|x} \right)$  with a finite input set  ${\mathcal{X}}$ and  a finite output set ${\mathcal{Y}}$.
		
		The source message set $\left[ {{\text{1}}:{2^{nR}}} \right] = \left\{ {{\text{1}},{\text{2}},...,{2^{nR}}} \right\}$ consists a semantic message set $\left[ {{\text{1}}:{2^{\left\lceil {\alpha nR} \right\rceil }}} \right] \\= \left\{ {{\text{1}},{\text{2}},...,{2^{\left\lceil {\alpha nR} \right\rceil }}} \right\}$,   where the coefficient $\alpha $ determines the ratio of many to one, and lies in the range $0 \leq \alpha  < 1$.
	
		A semantic encoder  assigns a codeword ${X^n}\left( m \right)$ to
		message $W \in \left[ {{\rm{1 : }}{2^{ nR}}} \right]$,   where $m \in \left[ {1:{2^{\alpha nR}}} \right]$ is the index of the message set with the same semantic meaning.
		Furthermore, let
		${\Phi _m}$ denote the subset of message $W$ with the same semantic information, $ m \in \left[ {1:{2^{\alpha nR}}} \right]$, where ${\Phi _m}$ satisfies ${\Phi _m} \subseteq \left[ {{\rm{1 : }}{2^{nR}}} \right]$,   ${\Phi _m} \cap {\Phi _l} = \emptyset ,\forall m \ne l$ and $\bigcup\nolimits_{m = 1}^{{2^{\alpha nR}}} {{\Phi _m} = \left[ {{\rm{1 : }}{2^{nR}}} \right]} $.
		Then, a semantic decoder   assigns an estimate $\widehat m \in \left[ {1:{2^{\alpha nR}}} \right]$ or an error message $\varepsilon$ to each received sequence ${Y^n}$, $ {\mathcal{Y}^n}  \in  \left[ {{\rm{1 : }}{2^{\alpha nR}}} \right] \cup \left\{ \varepsilon  \right\}$.
		We assume that the semantic message is uniformly distributed over the message set $\left[ {1:{2^{\alpha nR}}} \right]$.
		

		The performance of a given code is measured by the probability that the estimate of
		the semantic message is different from the actual semantic message sent. The set $\mathcal{C} = \left\{ {{x^n}\left( 1 \right),{x^n}\left( 2 \right), \ldots ,} \right.$ $\left. {{x^n}\left( {{2^{\left\lceil {\alpha nR} \right\rceil }}} \right)} \right\}$is referred to as the codebook associated with the $\left( {{2^{\alpha nR}},n} \right)$ code. More precisely, let ${\lambda _m}\left( {\cal C} \right) = {\rm{Pr}}\left\{ {\widehat S \ne m|S = m} \right\}$ be the conditional probability of error given  semantic  index $m$ is sent. Then, the
		average probability of error for a $\left( {{2^{\alpha nR}},n} \right)$ code is,
		\begin{align}P_{{\text{e,s}}}^{\left( n \right)}\left( \mathcal{C} \right) = {\text{Pr}}\left\{ {\widehat S \ne S} \right\} = \frac{1}{{{2^{\left\lceil {\alpha nR} \right\rceil }}}}\sum\limits_{m = 1}^{{2^{\left\lceil {\alpha nR} \right\rceil }}} {{{\text{E}}_\mathcal{C}}\left( {{\lambda _m}\left( \mathcal{C} \right)} \right)}. 	
		\end{align}


		\subsection{Semantic channel coding theorem}


		Given the semantic mapping   ${f_s}:\left[ {1:{2^{nR}}} \right] \to \left[ {1:{2^{\alpha nR}}} \right]$ and
		the discrete memoryless  channel $p\left( {y\left| x \right.} \right)$, we define the   semantic channel capacity  ${C_{\text{s}}}$ as
		\begin{align}
			{C_{\text{s}}} = \mathop {\max }\limits_{p\left( x \right)} \frac{{I\left( {X;Y} \right)}}{\alpha }.
		\end{align}
		\begin{theorem}{\textbf{Semantic channel coding theorem:}}
			For every bit rate $R < {C_{\text{s}}} = {\max _{p\left( x \right)}}\frac{{I\left( {X;Y} \right)}}{\alpha }$, there exists a sequence of $\left( {{2^{nR}},n} \right)$ codes with average probability of error $P_{{\text{e,s}}}^{\left( n \right)}$ that tends to zero as $n \to \infty $.\\
			Conversely, for every sequence of  $\left( {{2^{nR}},n} \right)$ codes with probability of error $P_{{\text{e,s}}}^{\left( n \right)}$ that tends to zero as $n \to \infty $, the rate must satisfy $R < {C_{\text{s}}} = {\max _{p\left( x \right)}}\frac{{I\left( {X;Y} \right)}}{\alpha }$.
		\end{theorem}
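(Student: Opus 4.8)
The plan is to reduce both directions to Shannon's channel coding theorem, the key observation being that the many-to-one encoder produces only $2^{\lceil \alpha n R\rceil}$ distinct codewords, one per semantic index $m$, so the operational object is a channel code carrying $\alpha R$ bits per channel use over the DMC $p(y|x)$. Throughout I abbreviate $C:=\max_{p(x)} I(X;Y)$, so that $C_{\text{s}}=C/\alpha$.

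For achievability I would use a random-coding and joint-typicality argument. First fix the input distribution $p(x)$ attaining $C$ and generate the codebook by drawing $2^{\lceil \alpha n R\rceil}$ codewords $X^n(m)$ independently, each with i.i.d. components $\sim p(x)$. To send semantic message $m$, transmit $X^n(m)$; since every source word $W\in\Phi_m$ is mapped to the same codeword, only the semantic index influences the channel input, which is exactly where the many-to-one relation enters. At the receiver, declare $\widehat m$ if it is the unique index for which $(X^n(\widehat m),Y^n)$ is jointly typical, and declare $\varepsilon$ otherwise. The error analysis is the standard one: by the joint AEP the probability of missing the true pair vanishes, and a union bound over the at most $2^{\lceil \alpha n R\rceil}-1$ incorrect codewords contributes a term of order $2^{\alpha n R}2^{-n(I(X;Y)-\epsilon)}$. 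Averaging over the codebook, $P_{\text{e,s}}^{(n)}\to 0$ whenever $\alpha R<I(X;Y)-\epsilon$; taking $p(x)$ capacity-achieving and $\epsilon\to 0$ gives the condition $\alpha R<C$, i.e. $R<C/\alpha=C_{\text{s}}$. Existence of one codebook meeting the average-error bound follows because the expectation over random codebooks vanishes.

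For the converse I would run Shannon's entropy chain on the semantic message, but invoking the generalized Fano inequality of Section IV in place of the classical one. Since the semantic message $S$ (equivalently $m$) is uniform on a set of size $2^{\lceil \alpha n R\rceil}$ and $S\to X^n\to Y^n\to\widehat S$ is a Markov chain, I write $\alpha n R=H(S)=I(S;\widehat S)+H(S\mid \widehat S)$. The data-processing inequality gives $I(S;\widehat S)\le I(X^n;Y^n)$, and memorylessness together with the definition of $C$ gives $I(X^n;Y^n)\le\sum_{i=1}^n I(X_i;Y_i)\le nC$. The generalized Fano inequality bounds the residual term by $H(S\mid\widehat S)\le 1+P_{\text{e,s}}^{(n)}\alpha n R$, the point being that the uncertainty is measured on the coarsened semantic alphabet, so that only the semantic error probability $P_{\text{e,s}}^{(n)}$ appears. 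Combining these, $\alpha n R\le nC+1+P_{\text{e,s}}^{(n)}\alpha n R$; dividing by $n$ and letting $n\to\infty$ with $P_{\text{e,s}}^{(n)}\to 0$ forces $\alpha R\le C$, hence $R\le C/\alpha=C_{\text{s}}$.

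The main obstacle I anticipate is the converse, specifically the correct formulation and application of the generalized Fano inequality. In the ordinary setting Fano controls $H(W\mid Y^n)$ through the message-error probability, but here a decoding is counted correct as soon as the semantic class is right, so $P_{\text{e,s}}^{(n)}$ can be strictly smaller than the source-message error. The delicate step is to show that it is the uncertainty about the semantic variable $S$, and not about the full source word $W$, that must be penalized, so that the many-to-one ambiguity $H(W\mid S)$ does not leak into the bound and the chain closes exactly at $R=C/\alpha$. Verifying that the ceiling $\lceil \alpha n R\rceil$ contributes only a vanishing $O(1/n)$ correction, and that the $P_{\text{e,s}}^{(n)}\alpha n R$ term is absorbed as $P_{\text{e,s}}^{(n)}\to 0$, are the remaining routine details.
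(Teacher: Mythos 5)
Your proposal is correct, and the achievability half is essentially the paper's argument: fix a capacity-achieving $p(x)$, draw one i.i.d.\ codeword per semantic index (the paper uses $2^{nR_{\text{s}}}$ codewords with $R_{\text{s}}\ge\alpha R$, you use $2^{\lceil\alpha nR\rceil}$ directly --- the same thing), encode $W$ by transmitting $x^n(f_{\text{s}}(W))$, decode by unique joint typicality, and close with the union bound $2^{\alpha nR}2^{-n(I(X;Y)-3\varepsilon)}\to 0$ when $\alpha R<I(X;Y)$. The converse, however, is where you genuinely diverge. The paper keeps the full source message $W$, writes $nR=H(W)\le H(W|Y^n)+I(X^n;Y^n)$, and then needs its new ``semantic Fano's inequality'' $H(W|Y^n)\le 1+\left(1-\alpha+(\gamma+\alpha-1)P_{\text{e,s}}^{(n)}\right)nR$, whose extra $(1-\alpha)nR$ slack is precisely the within-class uncertainty $H(W|S)$ that survives even under correct semantic decoding; this term then cancels against the corresponding part of $H(W)=nR$, leaving $\alpha nR\le nC+1+o(n)$. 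You instead coarsen to the semantic variable first, write $\alpha nR=H(S)=I(S;\widehat S)+H(S|\widehat S)$, and apply the \emph{classical} Fano inequality on the alphabet of size $2^{\lceil\alpha nR\rceil}$ (your ``generalized Fano'' $H(S|\widehat S)\le 1+P_{\text{e,s}}^{(n)}\alpha nR$ is not the paper's Theorem 2 but ordinary Fano on the quotient alphabet). The Markov chain $S\to X^n\to Y^n\to\widehat S$ you need is valid even when the encoder depends on all of $W$, since $p(y^n|x^n,s)=p(y^n|x^n)$, so your data-processing step goes through. The obstacle you flag --- preventing $H(W|S)$ from leaking into the bound --- is real for the paper's route and is exactly what its modified Fano inequality is built to handle, whereas your route sidesteps it entirely by never introducing $H(W)$. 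Your argument is the more elementary one (no new lemma needed); the paper's buys an explicit accounting of how the many-to-one ambiguity enters and cancels, at the cost of deriving and book-keeping the $\beta,\gamma$ constants. Both close at $R\le C/\alpha$ (note that, as in Shannon's converse, neither argument actually yields the strict inequality claimed in the theorem statement).
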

		
		Note that, different form the  proof of    Shannon  channel coding theorem,
		the semantic information mapping should be incorporated in both semantic encoding and decoding in the
		achievability proof of proposed semantic coding theorem.
		We first prove the achievability in the following subsection. The proof of the converse is
		given in Section IV. B
		
		\begin{proof}
			For simplicity of presentation, we assume throughout the proof that $nR$, $\alpha nR$ and $n{R_{\text{s}}}$  are  integers.
			
			\textbf{Semantic source:} Given a source message  $W \in \left[ {{\rm{1 : }}{2^{nR}}} \right] $.  Let $S$   represent the inherent semantic information of $W$.
			Let ${f_{\text{s}}}:\left[ {1:{2^{nR}}} \right] \to \left[ {1:{2^{\alpha nR}}} \right]$ represent the semantic mapping from $W$ to $S$.

			\textbf{Random codebook generation:} By applying random coding, we randomly and independently generate ${2^{n{R_{\text{s}}}}}$ sequences ${x^n}\left( m \right)$, each according to $p\left( {{x^n}} \right) = \prod\limits_{i = 1}^n {{p_X}\left( {{x_i}} \right)}$. The generated sequences constitute the codebook  $\mathcal{C} = \left[ {{\rm{1 : }}{2^{n{R_{\text{s}}}}}} \right]$ as follows
			\begin{align}
				p\left( \mathcal{C} \right) = \prod\limits_{m = 1}^{{2^{n{R_{\text{s}}}}}} {\prod\limits_{i = 1}^n {{p_X}\left( {{x_i}\left( m \right)} \right)} } .
			\end{align}
			The   codebook $\mathcal{C}$ is known by  both the semantic encoder and the  semantic decoder.
			In order to  represent ${2^{\alpha nR}}$ semantic information losslessly, ${R_{\text{s}}}$ must satisfy
			\begin{align}
				{R_{\text{s}}} \geq   \alpha R.
			\end{align}
			
			\textbf{Semantic encoding:} Given a source message   ${W}$,  the encoder finds the corresponding semantic information set index $m$, i.e., to send semantic  index $ m \in \left[ {1:{2^{n{R_{\text{s}}}}}} \right]$, and transmits ${x^n}\left( m \right)$.
			
			\textbf{Semantic  decoding:}  Let ${y^n}$ be the received sequence.
			The receiver declares that $\widehat m $ is sent if  ${x^n}\left( {\widehat m} \right)$ and ${y^n}$ are jointly typical; otherwise, if there is none or more than one such message, it declares an error  $\mathcal{E}$.

			\textbf{Analysis of the probability of error:}	Assume that the sending   semantic index $m$ obeys a uniform distribution  on $m \in \left[ {1:{2^{n{R_{\text{s}}}}}} \right]$. Without loss of generality, we assume  that semantic index $m = 1$ is sent and typical sequence   ${x^n}\left( 1 \right)$ is sent, received as ${y^n}$, and thus the decoder makes an error if $\left( {{x^n}\left( 1 \right),{y^n}} \right)$ are not jointly typical.
			Let $\mathcal{E} = \left\{ {\widehat m \ne m} \right\}$ denote the semantic error event. Consider the probability of semantic errors averaged over all codewords and codebooks, we have
			\begin{subequations}
				\begin{align}
					{\text{Pr}} \left( \mathcal{E} \right) &= {{\rm E}_\mathcal{C}}\left( {P_{{\text{e,s}}}^{\left( n \right)}} \right)\label{SE_a} \\
					&= {{\rm E}_\mathcal{C}}\left( {\frac{1}{{{2^{n{R_{\text{s}}}}}}}\sum\limits_{m = 1}^{{2^{n{R_{\text{s}}}}}} {{\lambda _m}\left( \mathcal{C} \right)} } \right) \label{SE_b} \\
					&= \frac{1}{{{2^{n{R_{\text{s}}}}}}}\sum\limits_{m = 1}^{{2^{n{R_{\text{s}}}}}} {{{\rm E}_\mathcal{C}}\left( {{\lambda _m}\left( \mathcal{C} \right)} \right)}  \label{SE_c} \\
					&=   {{\rm E}_\mathcal{C}}\left( {{\lambda _1}\left( \mathcal{C} \right)} \right) \label{SE_d} \\
					&= \Pr \left( {\mathcal{E}\left| {m = 1} \right.} \right),
				\end{align}
			\end{subequations}
			where $\lambda _m $ is the conditional probability of semantic error given semantic index $m$ was sent.
			Equation \eqref{SE_d} holds due to the symmetry of the random codebook generation.
			

			The decoder makes an error if one or both of the following events occur:
			\begin{itemize}	
				\item  $E_1^c$ is defined as   $\left( {{X^n}\left( 1 \right),{Y^n}} \right)$  not   jointly typical;
				\item  ${E_m}$ is defined as  $\left( {{X^n}\left( m \right),{Y^n}} \right)$   semantic jointly typical, where $m \ne 1$.	
			\end{itemize}
			Thus, by the union of events bound
			\begin{subequations}
				\begin{align}
					{\text{Pr}}\left( \mathcal{E} \right) &= P\left( {E_1^c \cup {{\left\{ {{E_m}} \right\}}_{m \ne 1}}} \right) \label{EB_a}\\
					&\leq  P\left( {E_1^c} \right) + P\left( {{{\left\{ {{E_m}} \right\}}_{m \ne 1}}} \right) \label{EB_b}\\
					&\leq \varepsilon  + \sum\limits_{m \ne 1} {P\left( {{E_m}} \right)}  \label{EB_c}\\
					&\leq \varepsilon  + \sum\limits_{m \ne 1} {{2^{ - n\left( {I\left( {X;Y} \right) - 3\varepsilon } \right)}}}  \label{EB_d}\\
					&= \varepsilon  + \left( {{2^{n{R_{\text{s}}}}} - 1} \right){2^{ - n\left( {I\left( {X;Y} \right) - 3\varepsilon } \right)}} \label{EB_f}\\
					&\leq \varepsilon  + {2^{3n\varepsilon }}{2^{ - n\left( {I\left( {X;Y} \right) - {R_{\text{s}}}} \right)}}  \label{EB_g}\\
					&\leq 2\varepsilon\label{EB_g},
				\end{align}
			\end{subequations}
			where inequality \eqref{EB_b} holds due to  the probabilities of union  events bound, inequality \eqref{EB_c} holds due to    $P\left( {E_1^c} \right) \to 0$  as $n \to \infty$,
			 inequality (15d) holds  due to the  property of the joint AEP, i.e., the probability that ${X^n}\left( m \right)$ and ${Y^n}$  are jointly typical is less than  $ {2^{ - n\left( {I\left( {X;Y} \right) - 3\varepsilon } \right)}}$ (that is, ${P\left( {{E_m}} \right)}\leq  {2^{ - n\left( {I\left( {X;Y} \right) - 3\varepsilon } \right)}}$)\cite{Cover_book_1999},
			 inequality \eqref{EB_f} holds  due to $\left( {{2^{n{R_{\text{s}}}}} - 1} \right) \to {2^{n{R_{\text{s}}}}}$ as $n \to \infty$, and
			inequality \eqref{EB_g} holds  because   ${R_{\text{s}}} \leq I\left( {X;Y} \right)$.

			Furthermore, by combining  ${R_{\text{s}}} \geq   \alpha R$   and ${R_{\text{s}}} \leq I\left( {X;Y} \right)$, we can obtain
			\begin{align}R \leq \frac{{I\left( {X;Y} \right)}}{\alpha }.
			\end{align}
			Hence, there   exists a sequence of $\left( {{2^{nR}},n} \right)$  codes such that ${\lim _{n \to \infty }}P_{{\text{e,s}}}^{\left( n \right)} = 0$, which proves that $R < {C_{\text{s}}}= {\max _{p\left( x \right)}}\frac{{I\left( {X;Y} \right)}}{\alpha }$ is achievable. Finally, taking $\varepsilon  \to 0$ completes the proof.
		\end{proof}	
		
		\section{Converse Proof}
		Fano's inequality is the key to proving the converse  of the channel coding theorem \cite{Cover_book_1999}.
		However, since different symbols may have the same semantic information, the traditional bit error judgment rule cannot be directly used for semantic symbol error judgment, which makes Fano's inequality not applicable to semantic communications.

		\subsection{ Extended Fano's inequality}
		To prove the converse  of the semantic channel coding theorem, we first derive and prove the semantic Fano's inequality.
		Specifically,
		the message $W$ is uniformly distributed on the set    $\left[ {{\rm{1 : }}{2^{ nR}}} \right] \triangleq \Psi$, $W \in \Psi$ and the sequence ${Y^n}$ is related probabilistically to $W$.
		From ${Y^n}$, we estimate that the message $W$  was sent. Let the estimate be $\widehat W = g\left( {{Y^n}} \right)$.
		Thus, $ W \to {X^n}\left( W \right) \to {Y^n} \to \widehat W $ forms a Markov chain.
		Define the semantic error probability $P_{{\text{e,s}}}^{\left( n \right)} \triangleq \Pr \left\{ {\widehat W \notin {\Phi _W}} \right\}$.
		For semantic communication systems, assume that the number of messages with the same semantic information as $W$ is ${2^{\left( {1 - \alpha } \right)nR}} = \beta \left| \Psi  \right|$, and the set of messages  is denoted as ${\Phi _W}$, $\left| {{\Phi _W}} \right| = {2^{\left( {1 - \alpha } \right)nR}}$, where $\alpha $  and $\beta $ are coefficients which satisfy $0 \leq \alpha  < 1$, $0 \leq \beta  < 1$, respectively.
		
		\begin{theorem}
			\textbf{Semantic Fano's inequality:}
			Given message $W \in \left[ {{\rm{1 : }}{2^{ nR}}} \right]$	with semantic $\left[ {1:{2^{nR}}} \right] \to \left[ {1:{2^{\alpha nR}}} \right]$. Let $P_{{\text{e,s}}}^{\left( n \right)}= \Pr \left\{ {\widehat {W} \notin {\Phi _{W}}} \right\}$, we have
			\begin{align}\label{Fano}
				H\left( {W|{Y^n}} \right) \leq 1 + \left( {1 - \alpha  + \left( {\gamma  + \alpha  - 1} \right)P_{{\text{e,s}}}^{\left( n \right)}} \right)nR.
			\end{align}
		\end{theorem}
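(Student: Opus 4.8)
The plan is to replay the classical derivation of Fano's inequality, but with the error event redefined at the \emph{semantic} level instead of the symbol level. First I would introduce a binary semantic-error indicator $E$, setting $E=1$ when $\widehat{W}\notin\Phi_W$ and $E=0$ otherwise, so that $\Pr\{E=1\}=P_{\text{e,s}}^{(n)}$. Since $\widehat{W}=g(Y^n)$ is a deterministic function of $Y^n$ and $E$ is determined by the pair $(W,\widehat{W})$, the variable $E$ is a function of $(W,Y^n)$; hence $H(E\mid W,Y^n)=0$.

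Next I would expand $H(E,W\mid Y^n)$ in two ways via the chain rule,
\[H(E,W\mid Y^n)=H(W\mid Y^n)+H(E\mid W,Y^n)=H(E\mid Y^n)+H(W\mid E,Y^n),\]
and use $H(E\mid W,Y^n)=0$ to obtain $H(W\mid Y^n)=H(E\mid Y^n)+H(W\mid E,Y^n)$. The first term satisfies $H(E\mid Y^n)\le H(E)\le 1$ because $E$ is binary, which supplies the additive constant $1$.

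The crux is bounding $H(W\mid E,Y^n)=\Pr\{E=0\}\,H(W\mid E=0,Y^n)+\Pr\{E=1\}\,H(W\mid E=1,Y^n)$ through the cardinalities of the admissible message sets in each case. When $E=0$, the decoder output $\widehat{W}$ (known from $Y^n$) lies in the same semantic class as $W$, so $W$ is confined to $\Phi_{\widehat{W}}$, a set of $2^{(1-\alpha)nR}$ messages, giving $H(W\mid E=0,Y^n)\le(1-\alpha)nR$. When $E=1$, $W$ lies outside $\Phi_{\widehat{W}}$, i.e.\ among the $2^{nR}-2^{(1-\alpha)nR}$ messages carrying a different meaning, a count I would denote $2^{\gamma nR}$, giving $H(W\mid E=1,Y^n)\le\gamma nR$. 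Substituting $\Pr\{E=1\}=P_{\text{e,s}}^{(n)}$ and collecting terms yields
\[H(W\mid Y^n)\le 1+(1-P_{\text{e,s}}^{(n)})(1-\alpha)nR+P_{\text{e,s}}^{(n)}\gamma nR=1+\big(1-\alpha+(\gamma+\alpha-1)P_{\text{e,s}}^{(n)}\big)nR,\]
which is exactly the claimed inequality.

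The main obstacle I anticipate is the clean bookkeeping of the error-case cardinality: I must argue that conditioning on $Y^n$ fixes $\widehat{W}$ and hence the class $\Phi_{\widehat{W}}$, so that the partition property $\Phi_m\cap\Phi_l=\emptyset$ together with $\bigcup_m\Phi_m=\Psi$ lets me count the remaining candidates as exactly $2^{nR}-2^{(1-\alpha)nR}$. Pinning down what $\gamma$ names here (and checking $\gamma\le 1$ so the bound stays non-vacuous) is the only delicate point; the remainder reduces to the standard maximum-entropy-over-a-finite-set estimate $H(\cdot)\le\log_2|\cdot|$.
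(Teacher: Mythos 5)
Your proposal is correct and follows essentially the same route as the paper's proof: the binary semantic-error indicator $E$, the two-way chain-rule expansion of $H\left( {E,W|{Y^n}} \right)$, and the cardinality bounds $2^{\left( {1 - \alpha } \right)nR}$ for the $E=0$ case and $2^{nR} - 2^{\left( {1 - \alpha } \right)nR} = \left( {1 - \beta } \right)\left| \Psi \right| = 2^{\gamma nR}$ for the $E=1$ case, which matches the paper's definition $\gamma = \frac{1}{{nR}}{\log _2}\left( {1 - \beta } \right) + 1$. Your added remark that one should verify $\gamma \leq 1$ (which holds since ${\log _2}\left( {1 - \beta } \right) \leq 0$) is a sensible check the paper leaves implicit.
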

		\begin{proof}
			Define a semantic error random variable $E$ as follows
			\begin{align}
				E = \left\{ \begin{gathered}
					\begin{array}{*{20}{c}}
						1&{{\text{if }}\widehat W \notin {\Phi _W}}
					\end{array} \hfill \\
					\begin{array}{*{20}{c}}
						0&{{\text{if }}\widehat W \in {\Phi _W}}
					\end{array} \hfill \\
				\end{gathered}  \right..
			\end{align}
			Then, by applying the chain rule for entropies to expand $H\left( {E,W|{Y^n}} \right)$ in two different ways, we have
			\begin{subequations}
				\begin{align}
					H\left( {E,W|{Y^n}} \right) &= H\left( {W|{Y^n}} \right) + H\left( {E|W,{Y^n}} \right) \\
					&= H\left( {E|{Y^n}} \right) + H\left( {W|E,{Y^n}} \right).
				\end{align}
			\end{subequations}
			Because conditioning reduces entropy, $H\left( {E|{Y^n}} \right) \leq H\left( E \right)$. Moreover,
			due to $E$ being a function of $W$ and ${Y^n}$, the conditional entropy $H\left( {E|W,{Y^n}} \right) = 0$. Furthermore, since $E$ is a binary-valued random variable, $H\left( E \right) \leq 1$. Thus, $H\left( {W|E,{Y^n}} \right)$ is bounded as follows
			\begin{subequations}
				\begin{align}
					H\left( {W|E,{Y^n}} \right)
					= &\Pr \left( {E = 0} \right)H\left( {W|{Y^n},E = 0} \right)
					+ \Pr \left( {E = 1} \right)H\left( {W|{Y^n},E = 1} \right) \label{Fano_a}\\
					\leq& \left( {1 - P_{{\text{e,s}}}^{\left( n \right)}} \right){\log _2}\left( {\beta \left| \Psi  \right|} \right)
					+ P_{{\text{e,s}}}^{\left( n \right)}{\log _2}\left( {\left( {1 - \beta } \right)\left| \Psi  \right|} \right) \label{Fano_b}\\
					= &\left( {1 - P_{{\text{e,s}}}^{\left( n \right)}} \right)\left( {1 - \alpha } \right)nR + P_{{\text{e,s}}}^{\left( n \right)}\gamma nR \label{Fano_c}\\
					= &\left( {1 - \alpha  + \left( {\gamma  + \alpha  - 1} \right)P_{{\text{e,s}}}^{\left( n \right)}} \right)nR,\label{Fano_d}
				\end{align}
			\end{subequations}
			where  $\gamma  \triangleq \frac{1}{{nR}}{\log _2}\left( {1 - \beta } \right) + 1$,   inequality \eqref{Fano_b} holds since $E = 0$, $\widehat W \in {\Phi _W}$ and  $E = 1$, $\widehat W \notin {\Phi _W}$. The upper bound on the conditional entropy is the log of the number of possible outcomes of $W$. Combining these results, we obtain the semantic Fano inequality  \eqref{Fano}.
		\end{proof}

		\subsection{Converse to semantic channel coding theorem}
		To prove the converse to the channel coding
		theorem, we need to show that for every sequence of $\left( {{2^{nR}},n} \right)$ codes with ${\lim _{n \to \infty }}P_{{\text{e,s}}}^{\left( n \right)} = 0$, we must have $R < {C_{\text{s}}} = {\max _{p\left( x \right)}}\frac{{I\left( {X;Y} \right)}}{\alpha }$.
		
		\begin{proof}
			For a given semantic encoding ${X^n}\left(  \cdot  \right)$ and a given semantic  decoding rule $\widehat W = g\left( {{Y^n}} \right)$, we have $ W \to {X^n}\left( W \right) \to {Y^n} \to \widehat W $. For each $n$, let $W$ be drawn according to a uniform distribution over $\left[ {{\rm{1 : }}{2^{ nR}}} \right]$. Define $P_{{\text{e,s}}}^{\left( n \right)} = \Pr \left\{ {\widehat W \notin {\Phi _W}} \right\}$, hence, we have
			\begin{subequations}
				\begin{align}
					nR =& H\left( W \right) \label{conv_a}\\
					=& H\left( {W|{Y^n}} \right) + I\left( {W;{Y^n}} \right) \label{conv_b}\\
					\leq& H\left( {W|{Y^n}} \right) + I\left( {{X^n}\left( W \right);{Y^n}} \right) \label{conv_c}\\
					\leq& 1 + \left( {1 - \alpha  + \left( {\gamma  + \alpha  - 1} \right)P_{{\text{e,s}}}^{\left( n \right)}} \right)nR
					+ I\left( {{X^n}\left( W \right);{Y^n}} \right)\label{conv_d}\\
					\leq& 1 + \left( {1 - \alpha  + \left( {\gamma  + \alpha  - 1} \right)P_{{\text{e,s}}}^{\left( n \right)}} \right)nR + n{C},\label{conv_f}
				\end{align}
			\end{subequations}
			where \eqref{conv_a} holds due to the assumption that $W$ is uniform over  $\left[ {{\rm{1 : }}{2^{ nR}}} \right]$, inequality \eqref{conv_c} holds due to the data-processing inequality, inequality \eqref{conv_d} holds due to   semantic Fano's inequality, and  inequality \eqref{conv_d} holds due to  $I\left( {{X^n}\left( W \right);{Y^n}} \right) \leq n{C}$. Dividing by $n$, we obtain
			\begin{align}
				R \leq &\frac{1}{{\left( {\alpha  - \left( {\gamma  + \alpha  - 1} \right)P_{{\text{e,s}}}^{\left( n \right)}} \right)n}}
				+ \frac{I\left( {X;Y} \right)}{{\alpha  - \left( {\gamma  + \alpha  - 1} \right)P_{{\text{e,s}}}^{\left( n \right)}}}.
			\end{align}
			Now letting $n \to \infty $ and $P_{{\text{e,s}}}^{\left( n \right)} \to 0$, and hence
			\begin{align}
				R \leq \frac{I\left( {X;Y} \right)}{\alpha }.
			\end{align}
			As a result,  for every sequence of $\left( {{2^{nR}},n} \right)$ codes with ${\lim _{n \to \infty }}P_{{\text{e,s}}}^{\left( n \right)} = 0$, we must have $R < {C_{\text{s}}} = {\max _{p\left( x \right)}}\frac{{I\left( {X;Y} \right)}}{\alpha }$,	which completes the proof of the converse.
		\end{proof}	
		
		\section{Experiments and Discussions}

		
		\begin{figure}[ht]
			\centering
			\includegraphics[width=11cm]{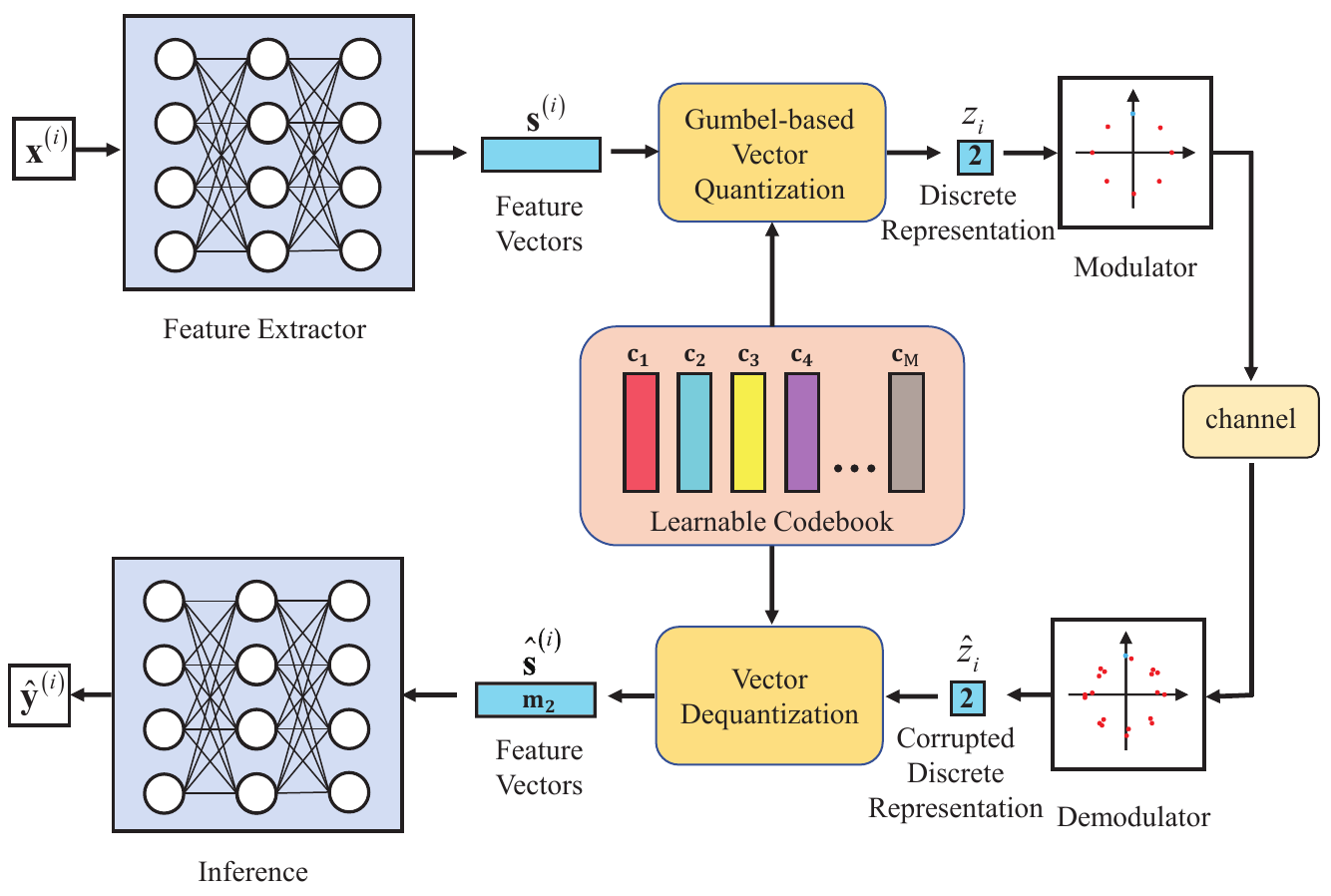}
			\caption{~ A task-oriented semantic communication system.}
			\label{img3}
		\end{figure}
		According to the proposed definition of general semantic entropy, the interpretation of the semantic variable of the classification task is the category. Therefore, the source semantic entropy of the classification task is the Shannon entropy of the category.
		In order to further verify the effectiveness of the definition of semantic entropy and rationality of the proposed semantic channel coding theorem, we adopt a classification task-oriented semantic communication system  \cite{Xie_2023} with the MNIST dataset.
		As is shown in Fig. \ref{img3}, based on the DL network, the feature extractor first extracts   feature vectors from the input image. Furthermore, according to the shared learnable  codebook, the Gumbel-based
		vector
		quantization converts the extracted feature vectors into   discrete representations, and then performs digital modulation and sends it to the receiver. The receiver first demodulates the received signal to corrupted discrete representations, and further performs vector dequantization to obtain feature vectors. Finally, the  feature vectors are decoded by DL-based inference   network, and output the classification information.

		\subsubsection{Semantic entropy verification}
		Assuming that the number of categories of input images     is $N$, and thus the source semantic entropy is ${\log _2}N$. Moreover, we adopt $M$-PSK digital modulation scheme.

		In order to verify the effectiveness of semantic entropy, we first consider the semantic source coding, ignoring the channel noise, that is, the DL-based encoder network and the  DL-based decoder network are jointly trained under the noise-free channel condition.
		\begin{figure}[ht]
			\centering
			\includegraphics[width=9.5cm]{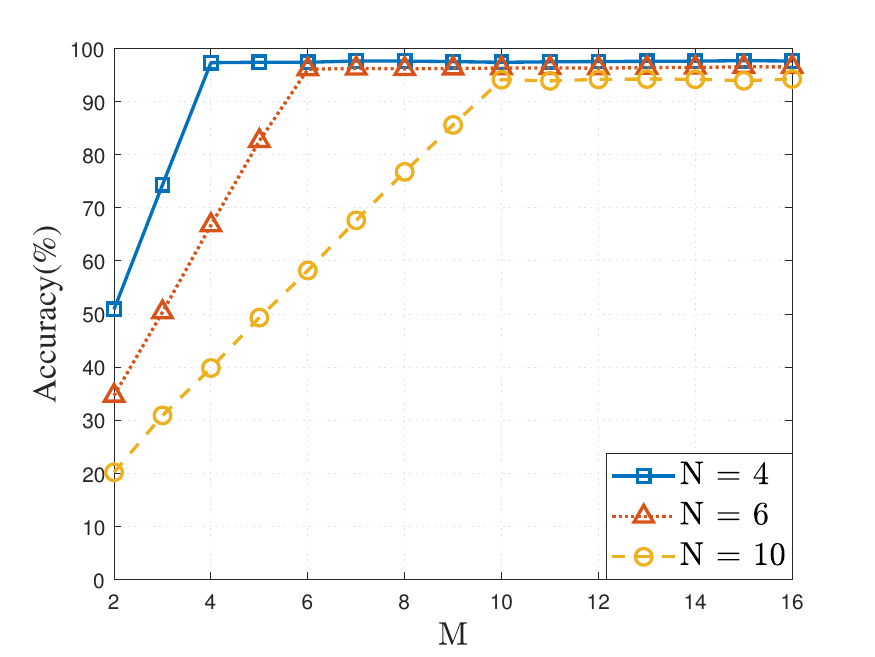}
			\caption{~ Performance of the MNIST $N$-classification task}
			\label{img4}
		\end{figure}

		Fig. \ref{img4} shows  the relationship between receiver inference accuracy performance and modulation order $M$-PSK with    input image categories   $N = 4,6,10$.
		From the  Fig. \ref{img4}, it can be observed that when the modulation order $M$ is smaller than the number of  input image categories $N$,  the inference accuracy   increases as the modulation order $M$ increases, because as $M$ increases, the modulated bits can represent more source semantic information;
		when $M \ge N$, the inference accuracy   reaches $95\%$ and remains unchanged, because
		the semantic information of the information source is fully represented by  the modulated bits. Therefore, Fig.  \ref{img4} verifies that the source semantic entropy are ${\log _2}N$ bits.
	Fig. \ref{img4} also shows the effect of  the knowledge bases on inference performance. Note that $M$  also denotes  the number of learnable codebook (knowledge base).   For the $N=4$ case, when $M=2$, the classification accuracy is only $50\%$, while when $M\ge4$, the classification accuracy is  more than $95\%$. This is because, when $M$ is 2, the knowledge base shared by the transmitter and receiver only contains two types of label information, which is not enough for the 4 labels classification tasks. When $M\ge4$, the knowledge base shared by the transmitter and receiver contains more than four   types of label information, which means that the label information has been
  	fully shared, and thus higher classification accuracy can be
  	achieved.

Moreover,  we     verified the semantic entropy   through the   CIFAR10 dataset with  image categories $N = 4, 6, 10$.
The semantic encoder network includes a three layers convolutional neural network and a two layers ResNet block, and the encoded semantic feature is  quantized by the binary quantization, where $L$ denotes the number of quantization bits.
The semantic decoder includes a two layers of ResNet block, one    maxpool layer and one dense layer.  As shown in Fig.\ref{img}, as   the   number of quantization bits $L$ increases, the inference accuracy  increases first, and then it remains almost unchanged.
The reason is that as      the   number of quantization bits $L$ increases, the source semantic information contained in semantic encoded feature increases, and when $L$ exceeds one threshold, the semantic coding feature  contain all the source semantic information.
For $N=4$, the source semantic entropy is $2$ bits, which is  the limit of semantic compression coding, and  the inference accuracy for the CIFAR10 dataset can reach  $90\%$ with  $L=16$ quantization bits.  For the same number of quantization bits $L$, the inference accuracy     decreases with the increase of  image categories $N$.

	\begin{figure}[ht]
		\centering
		\includegraphics[width=9.5cm]{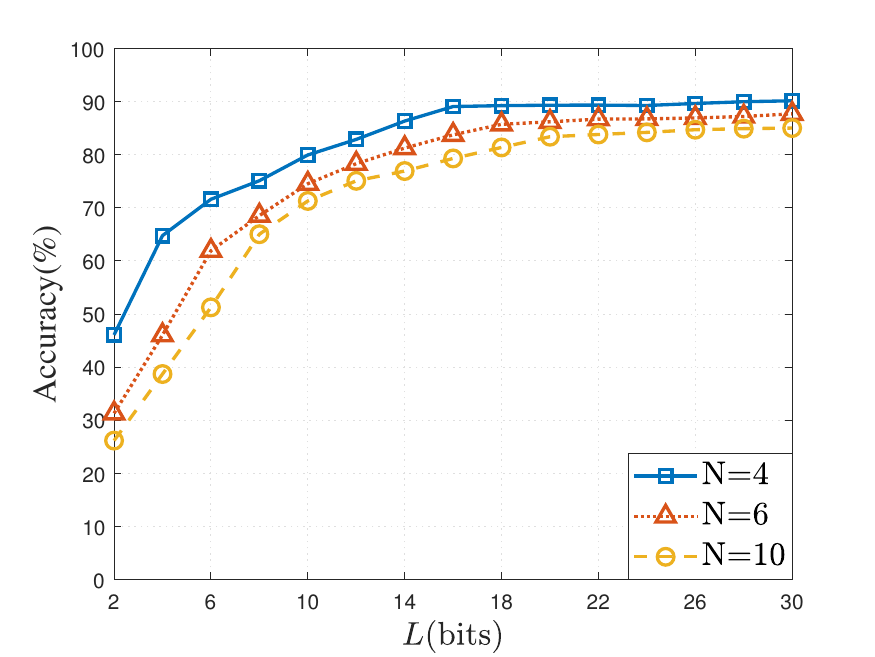}
		\caption{~ Performance of the CIFAR10 $N$-classification task}
		\label{img}
	\end{figure}

		%

		Fig. \ref{img5} (a) and  (b) shows the inference performance of   classification task for the MNIST dataset under different SNRs  with $N=6$ and $10$, respectively.
		From Fig. \ref{img5} (a) and  (b), we observe that the inference accuracy increases first and then remains constant as the SNR increases.
		When the modulation order $M$ is smaller than $N$, the inference accuracy is lower than $90\%$ even at high SNR. In particular, for the high SNR, when the modulation order $M$ increases to $N$, the inference accuracy will be significantly improved, and the inference accuracy will not increase with the increase of $M$. Fig. \ref{img5} (a) and  (b) verify that the semantic entropy of the classification task is equal to the Shannon entropy of the category.
		\begin{figure}
	\centering
	\begin{minipage}[b]{0.55\textwidth}
		\centering
		\includegraphics[width=9.5cm]{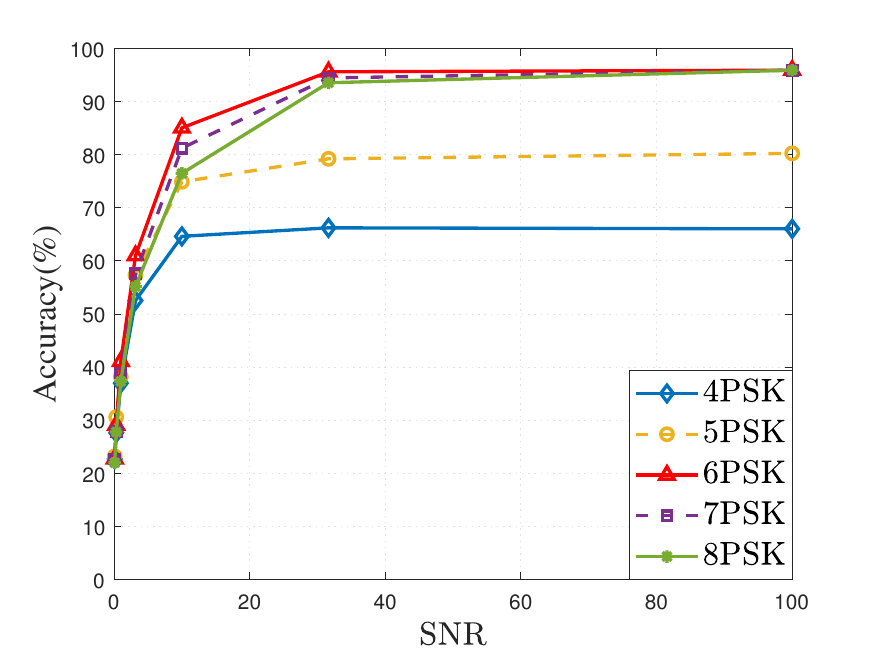}
		\vskip-0.2cm\centering {\footnotesize (a)}
	\end{minipage}\hfill
	\vskip 0.1cm
	\begin{minipage}[b]{0.55\textwidth}
		\centering
		\includegraphics[width=9.5cm]{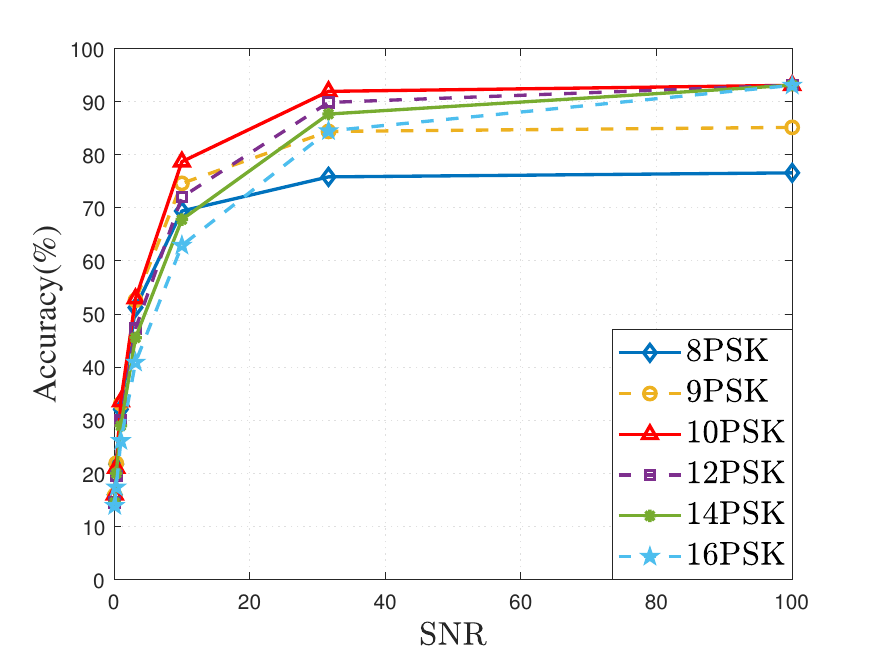}
		\vskip-0.2cm\centering {\footnotesize (b)}
	\end{minipage}\hfill
	\caption{(a)   the inference performance of 6-classification task and (b)   the inference performance of 10-classification task.
	}
	\label{img5}  
\end{figure}
		
		\subsubsection{Semantic channel coding theorem verification}
		Since complex Gaussian noise is utilized in the adopted model, the channel capacity $C = {\log _2}\left( {1 + {\text{SNR}}} \right)$. We already know the average size of each image of the MNIST dataset is about $6272{\text{ bits}}$. It can be seen from Fig. \ref{img5} (a) and  (b) that when the SNR is 63, the channel capacity is 6 ${\text{bit/s}}$, which is far less than the source entropy. However, when the modulation order $M \ge N$, the receiver  can still achieve high inference accuracy.
		In addition,  when the signal rate through the channel exceeds the channel capacity of Shannon theorem, the BER is significantly increased, and the information quality is seriously decreased. In fact, the channel capacity is only the limit that can be reached theoretically, and how to reach it in practice cannot be explained by the theorem.
		In Fig. \ref{img5} (b), we take 10PSK with ${\text{N = 10}}$ as an example, its transmission rate ${\text{R = }}{\log _2}10{\text{ bit/s}}$, when SNR is 9, the channel capacity ${\text{C = }}{\log _2}10{\text{ bit/s}}$, and the classification accuracy is about $75\% $. We also observe that  $R > C$ from Fig. \ref{img5} (a) and  (b),  some images are still accurately classified and the accuracy is greater than the $\frac{1}{{10}}$ probability of blind guess. The above experiments verify the rationality of the proposed semantic channel coding theorem.

		\section{Conclusions}
		In this paper,  we present  a more general definition of  semantic entropy and prove the semantic channel coding theorem for semantic communications. Different from existing definitions of semantic entropy which can only be applied to specific tasks, we  define  the       semantic entropy as the uncertainty in the semantic interpretation of random variable symbols in the context of knowledge bases, which not only depends on the probability distribution, but also depends on the specific value of the symbol and the background knowledge base.
		Moreover, 	semantic entropy can be transformed into existing semantic entropy definitions and Shannon entropy.
		Furthermore, exploiting the fact that different bits can have the same semantic information, we first propose   semantic channel coding theorem, and prove its achievability and converse.
		Our work attempts to provide   fundamentals  for the study of semantic communication.

		\bibliographystyle{IEEEtran}

		\bibliography{ref0216}                        
		
	\end{document}